\begin{document}

\newcommand{\reffig}[1]{Fig. \ref{#1}}
\newcommand{\figref}[1]{\figurename~\ref{#1}}

\newtheorem{definition}{Definition}
\newtheorem{theorem}{Theorem}
\newtheorem{corollary}{Corollary}
\newtheorem{proposition}{Proposition}
\newtheorem{lemma}{Lemma}
\newtheorem{property}{Property}
\newtheorem{remark}{Remark}
\renewcommand{\algorithmicrequire}{\textbf{Input:}}  
\renewcommand{\algorithmicensure}{\textbf{Output:}}  

\title{ \huge{Spectral and Energy Efficient Wireless Powered IoT Networks: NOMA or TDMA?}}

\author{\IEEEauthorblockN{Qingqing Wu,   Wen Chen,  Derrick Wing Kwan Ng, and Robert Schober
\thanks{Qingqing Wu and Wen Chen  are with Department of Electronic Engineering, Shanghai Jiao Tong University, email: \{wu.qq,  wenchen\}@sjtu.edu.cn.  Derrick Wing Kwan Ng is with the School of Electrical Engineering and Telecommunications, The University of New South Wales, Australia, email: w.k.ng@unsw.edu.au.  Robert Schober is with the Institute for Digital Communications, Friedrich-Alexander-University Erlangen-N\"urnberg (FAU), email: robert.schober@fau.de.   }
}  }

\maketitle
\begin{abstract}
Wireless powered communication networks (WPCNs), where multiple energy-limited devices first harvest energy in the downlink and then transmit information in the uplink, have been envisioned as a promising solution for the future Internet-of-Things (IoT). Meanwhile, non-orthogonal multiple access (NOMA) has been proposed
to improve the system spectral efficiency (SE) of the fifth-generation (5G) networks by allowing concurrent transmissions of multiple users in the same spectrum. As such, NOMA has been recently considered for the uplink of WPCNs based IoT networks with a massive number of devices.
However,  simultaneous transmissions  in NOMA  may also incur more transmit energy consumption as well as  circuit energy consumption in practice which is critical for energy constrained IoT devices. As a result,  compared to orthogonal multiple access schemes such as time-division multiple access (TDMA), whether the SE can be improved and/or the total energy consumption can be reduced with NOMA in such a scenario still remains unknown. To answer this question, 
we first derive the optimal time allocations for maximizing the SE of a TDMA-based WPCN (T-WPCN) and a NOMA-based WPCN (N-WPCN), respectively. Subsequently, we analyze the total energy consumption as well as the maximum SE achieved by these two networks. Surprisingly, it is found  that N-WPCN not only consumes more energy, but also is less spectral efficient than T-WPCN. Simulation results verify our theoretical findings and unveil the fundamental performance bottleneck, i.e., ``worst user bottleneck problem'',  in multiuser NOMA systems.

\end{abstract}

\vspace{-0.3cm}
\section{Introduction}
The number of connected devices will skyrocket to $30$ billion by $2025$, giving rise to the well known ``Internet-of-Things (IoT)" \cite{zhang2016fundamental}. With such a huge number of IoT devices, the lifetime of networks becomes a critical issue and the conventional battery based solutions may no longer be sustainable due to the high cost of battery replacement as well as environmental concerns. As a result, wireless power transfer, which enables energy harvesting from ambient radio frequency (RF) signals, is envisioned as a promising solution for powering massive IoT devices \cite{wu2016overview}. However, due to the significant signal attenuation in wireless communication channels, the harvested RF energy at the devices is generally limited. Therefore, how to efficiently utilize the scarce harvested energy becomes particularly crucial for realizing sustainable and scalable IoT networks.  To this end, a ``harvest and then transmit" protocol  is proposed in \cite{ju14_throughput,qing15_wpcn_twc,zhang2017wireless} for wireless powered communication networks (WPCNs), where devices first harvest energy in the downlink (DL) for wireless energy transfer (WET) and then transmit information signals in the uplink (UL) for wireless information transmission (WIT). 

 Meanwhile, non-orthogonal multiple access (NOMA) has been proposed to improve the SE as well as user fairness  by allowing multiple users simultaneously to access the same spectrum. With successive interference cancellation (SIC) performed at the receiver, NOMA has been demonstrated superior to orthogonal multiple access (OMA) in terms of the ergodic sum rate \cite{ding2014performance}. As such, NOMA is recently pursued for UL WIT in WPCNs \cite{diamantoulakis2016wireless,chingoska2016resource}, where the decoding order of the users is exploited to enhance the throughput fairness among users. However,  the conclusions drawn in \cite{ding2014performance} are only applicable for the DL scenario and may not hold for UL IoT networks with energy constrained devices. {Furthermore,  \cite{diamantoulakis2016wireless} and \cite{chingoska2016resource} focus only on improving the system/individual user throughput without considering the total system energy consumption. In fact, a theoretical total energy consumption comparison between NOMA and TDMA is important since the efficiency of WET is generally low in practice.
    Also, the circuit energy consumption of the users is completely ignored in \cite{ju14_throughput,diamantoulakis2016wireless,chingoska2016resource}. However, the circuit power consumption is often comparable to the transmit power and thus important for short-range IoT applications \cite{qing16_wpcn_twc,miao2010energy,
qing1,cui2004energy,qing15tx_rx}, such as wearables devices.} As multiple users access the same spectrum simultaneously in NOMA, the circuit energy consumption of each user increases inevitably, which may contradict a fundamental design requirement of future IoT networks, i.e., ultra low power consumption \cite{wang2017primer}. 
For example, in NOMA-based WPCN (N-WPCN) with a fixed total available harvested energy, if devices consume more energy for operating their circuits than in time-division multiple access (TDMA)-based WPCN (T-WPCN), then less energy will be left for signal transmission. As a result, a natural question arises: Does NOMA improve the SE and/or reduce the total energy consumption of such wireless powered IoT networks in practice compared to TDMA?



\begin{figure}[!t]
\centering
\includegraphics[width=3in]{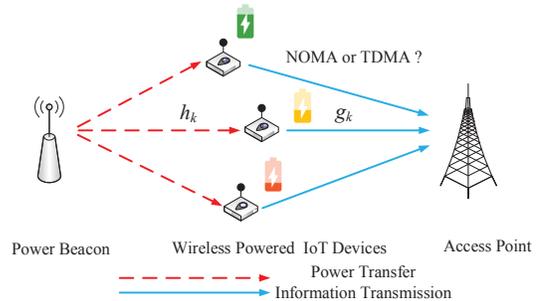}
\caption{System model  of a wireless powered IoT network. }\label{convergence}\vspace{-0.45cm}
\end{figure}

Driven by  the above question, we make the following contributions in this paper. 1) By taking into account the circuit energy consumption, we first derive the optimal time allocation for the SE maximization problem for T-WPCN, based on which,  the corresponding problem for N-WPCN can be cast as the single user case  for T-WPCN; 2) we prove that N-WPCN in general requires a longer DL WET time duration than T-WPCN, which implies that N-WPCN is more energy demanding; 3) we prove that N-WPCN in general achieves a lower SE than  T-WPCN. {Given 2) and 3), NOMA may not be a good candidate for realizing spectral and energy efficient wireless powered IoT networks if the circuit energy consumption is not negligible.}
\section{System Model and Problem Formulation}
\subsection{System Model}

 We consider a WPCN, which consists of one power beacon (PB), $K>1$ wireless-powered IoT devices, and one information access point (AP), as shown in Fig. 1.
 The total available transmission time is denoted by $T_{\max}$.  The ``harvest and then transmit" protocol \cite{ju14_throughput} is adopted where the devices first harvest energy from the signal sent by the PB and then transmit information to the AP. {We note that the ``doubly near-far phenomenon'' \cite{ju14_throughput} can be avoided by using separated PB and AP as in our model \cite{qing15_wpcn_twc,bi2016wireless}.}
To compare the upper bound performance of T-WPCN and N-WPCN,  we assume that perfect channel state information (CSI) is available for resource allocation.
The DL channel gain between the PB and device $k\in\{1,2,\ldots,K\}$, and the UL channel gain between device $k$ and the AP are denoted by $h_k$ and $g_k$, respectively.

During DL WET, the PB broadcasts the energy signal with a constant transmit power $P_{\rm E}$ for time $\tau_0$.
 The energy harvested from the noise and the received UL WIT signals
from other devices are assumed to be negligible,  since both the noise power and device transmit power are much smaller than the transmit power of the PB in practice \cite{ju14_throughput}. Thus, the amount of harvested energy  at device $k$ can be expressed as
\begin{align}\label{eq3}
E^h_k=\eta_kP_{\rm E}h_k\tau_0,
\end{align}
where $\eta_k \in (0,1]$ is the constant energy conversion efficiency of device $k$.
During UL WIT, device $k$ transmits its information signal to the AP with transmit power $p_k$. {In addition to the transmit power, each device also consumes a constant circuit power accounting for the power needed to operate its transmit filter, mixer, frequency synthesizers, etc., denoted by $p_{{\rm c},k}\geq 0$ \cite{zhang2016fundamental,qing15_wpcn_twc,qing16_wpcn_twc}.}
For the multiple access scheme in UL WIT, we consider two schemes, i.e., TDMA and NOMA. For T-WPCN, device $k$ exclusively accesses the spectrum for a duration of $\tau_k$, while for N-WPCN, all the devices access the spectrum simultaneously for a duration of $\bar{\tau}_{\rm 1}$. Then, the energy consumed by device $k$ during UL WIT for T-WPCN and N-WPCN can be expressed as $(p_{k}+p_{{\rm c},k})\tau_k$ and $(p_{k}+p_{{\rm c},k})\bar{\tau}_{\rm 1}$, respectively. Denote  $\gamma_k= \frac{g_k}{\sigma^2}$ as the normalized UL channel gain of device $k$, where $\sigma^2$ is the additive white Gaussian noise power at the AP.  For convenience,  we assume that the normalized UL channel power gains are sorted in ascending order, i.e., $0< \gamma_1\leq \gamma_2 \cdot\cdot\cdot \leq \gamma_K$.

\subsection{T-WPCN and Problem Formulation}
For T-WPCN, the achievable throughput of device $k$  in bits/Hz  can be expressed as
\begin{align}\label{eq6}
r_k=\tau_k \log_2\left(1+p_k\gamma_k\right).
\end{align}
 Then, the system throughput of T-WPCN is given by
\begin{align}
R_{\rm TDMA}=\sum_{k=1}^{K}r_k=\sum_{k=1}^{K} \tau_k  \log_2(1+{p_k\gamma_k}).
\end{align}
Accordingly, the SE maximization problem is formulated as
\begin{subequations} \label{probm10}
\begin{align}\label{eq10}
 \mathop {\mathrm{maximze} }\limits_{{\tau_{0},\{\tau_{k}\},\{p_{k}\}} }~~ &\sum_{k=1}^{K} \tau_k\log_2\left(1+p_k\gamma_k\right) \\
\mathrm{s.t.} ~~~~~~~&  \left({p_k}+p_{{\rm{c}},k}\right)\tau_k\leq \eta_kP_{\rm E}h_k\tau_0, ~ \forall\, k, \label{eq401} \\
& \tau_{0}+\sum_{k=1}^{K}\tau_k\leq T_{\mathop{\max}},  \label{eq402} \\
& \tau_{0}\geq0, ~  \tau_k\geq  0,  ~p_k\geq  0, ~\forall\, k.  \label{eq403}
\end{align}
\end{subequations}
In problem (\ref{probm10}), (\ref{eq401}) is the energy causality constraint which ensures that the energy consumed for WIT does not exceed the total energy harvested during WET. (\ref{eq402}) and (\ref{eq403}) are the total time constraint and the non-negativity constraints on the optimization variables, respectively.
\subsection{N-WPCN and Problem Formulation}
For N-WPCN, since all the $K$ devices share the same spectrum, SIC is employed at the AP to eliminate multiuser interference \cite{ding2014performance}.  Specifically, for detecting the message of the $k$-th device, the AP first decodes the message of the $i$-th device, $\forall\, i<k$, and then removes this message from the received signal, in the order of $i=1, 2,...,k-1$. The message of the $i$-th user, $\forall\, i>k$, is treated as noise. Hence, the achievable throughput of device $k$  in bits/Hz in N-WPCN  can be expressed as
\begin{align}\label{eq6}
r_k=\bar{\tau}_{\rm 1}\log_2\left(1+\frac{p_k\gamma_k}{\sum_{i=k+1}^{K} p_i\gamma_i+1}\right).
\end{align}
Then, the system throughput of T-WPCN is given by
\begin{align}\label{eq6}
R_{\rm NOMA}=\sum_{k=1}^{K}r_k=\bar{\tau}_{\rm 1} \log_2\left(1+\sum_{k=1}^{K}p_k\gamma_k\right).
\end{align}
Accordingly, the SE maximization problem is formulated as
 \begin{subequations} \label{probm20}
\begin{align}
 \mathop {\mathrm{maximize} }\limits_{{\tau_{0}, \bar{\tau}_{\rm 1},\{p_{k}\} } }~~ &\bar{\tau}_{\rm 1}\log_2\left(1+\sum_{k=1}^{K}p_k\gamma_k\right)\\
\mathrm{s.t.} ~~~~~~~&  \left({p_k}+p_{{\rm{c}},k}\right)\bar{\tau}_{\rm 1}\leq \eta_kP_{\rm E}h_k\tau_0, ~ \forall\, k,  \label{eq201}\\
&\tau_{0}+\bar{\tau}_{\rm 1}\leq T_{\mathop{\max}}, \label{eq202}\\
&\tau_{0}\geq0, ~ \bar{\tau}_{\rm 1}\geq  0, ~p_k\geq  0, ~\forall\, k.\label{eq203}
\end{align}
\end{subequations}
Similar to problem (\ref{probm10}), (\ref{eq201}), (\ref{eq202}), and (\ref{eq203}) represent the energy causality constraint, total time constraint, and non-negativity constraints, respectively.

\section{T-WPCN  or N-WPCN for IoT Networks?}
In this section, we first derive the optimal solutions to problems (\ref{probm10}) and (\ref{probm20}), respectively.  Then, we theoretically analyze and compare the system energy consumed and the SE achieved by both T-WPCN and N-WPCN.
\subsection{Optimal Solution for T-WPCN}
It can be shown that  each device will deplete all of its energy at the optimal solution, i.e., constraint (\ref{eq401}) holds with equality, since otherwise $p_k$ can be always increased to improve the objective value such that (\ref{eq401})  is active. Thus, problem (\ref{probm10}) is simplified to the following
  \begin{subequations}\label{probm30}
\begin{align}
 \mathop {\mathrm{maximize} }\limits_{{\tau_{0},\{\tau_{k}\} }}~~ &\sum_{k=1}^{K} \tau_k\log_2\left( 1-p_{{\rm{c}},k}\gamma_k+\frac{\eta_kP_{\rm E}h_k\gamma_k }{\tau_k}\tau_0\right)  \\
\mathrm{s.t.} ~~~
&\tau_{0}+\sum_{k=1}^{K}\tau_k\leq T_{\mathop{\max}}, \label{eq301} \\
& \tau_{0}\geq0, ~  \tau_k\geq  0, ~\forall\, k. \label{eq302}
\end{align}
  \end{subequations}
It is easy to verify that problem (\ref{probm30}) is a convex optimization problem and also satisfies the Slater's condition. Thus, the optimal solution can be obtained efficiently by applying the Lagrange dual method. To this end, we need the Lagrangian function of  problem (\ref{probm30}) which can be written as
\begin{align}
\mathcal{L}(\tau_0, \{\tau_k\}) = & \sum_{k=1}^{K} \tau_k\log_2\left(1-p_{{\rm{c}},k}\gamma_k+\frac{\eta_kP_{\rm E}h_k\gamma_k }{\tau_k} \tau_0\right) \nonumber\\
&+\lambda\left(T_{\max} -\tau_0-\sum_{k=1}^{K}\tau_k\right),
\end{align}
where $\lambda \geq 0$ is the Lagrange multiplier associated with (\ref{eq301}). (8c) is naturally satisfied since the PB is activated in the DL and each user is scheduled in the UL. Taking the partial derivative of $\mathcal{L}$ with respect to $\tau_0$ and $\tau_k$, respectively, yields

\begin{align}
\frac{\partial\mathcal{L}}{\partial \tau_0}&= \sum_{k=1}^{K} \frac{\eta_kP_{\rm E}h_k\gamma_k\log_2(e)}{ 1-p_{{\rm c},k}\gamma_k+x_k}-
\lambda,\\
\frac{\partial\mathcal{L}}{\partial \tau_k}&=\log_2\left(1-p_{{\rm c},k}\gamma_k +x_k\right)
                                                                                            -  \frac{x_k\log_2(e)}{1-p_{{\rm c},k}\gamma_k+x_k }-\lambda,
\end{align}
where $x_k = \frac{\eta_kP_{\rm E}h_k\gamma_k}{\tau_k}\tau_0, \forall\,k$.
Since $\tau_0>0$ and $\tau_k>0$, $\forall\,k$, always hold at the optimal solution, we have  $\frac{\partial\mathcal{L}}{\partial \tau_0}=0$ and $\frac{\partial\mathcal{L}}{\partial \tau_k}=0$, $\forall\, k$. As a result, the optimal values of $x_k$, $\forall\, k$, can be obtained by solving the following set of equations
 \begin{align}\label{eq_activate3011}
   \small
\mathcal{G}_k(x^*_k) \triangleq &\log_2(1-p_{{\rm{c}},k}\gamma_k +x^*_k)-\frac{x^*_k\log_2(e)}{1-p_{{\rm{c}},k}\gamma_k+x^*_k} \nonumber\\
&-\sum_{k=1}^{K}\frac{\eta_kP_{\rm{E}}h_k\gamma_k\log_2(e)}{1-p_{{\rm{c}},k}\gamma_k+x^*_k}=0, \forall\, k.
 \end{align}
{ Note that the first two terms of $\mathcal{G}_k(x^*_k)$ monotonically increase with $x^*_k$ while the last term is  the same for all users.  Thus,  $x^*_k$ can be efficiently obtained by the bisection method.}
 It can be shown that (\ref{eq301}) is active at the optimal solution, i.e., $\tau_0 +\sum_{k=1}^{K}\tau_k =  \tau_0 +\sum_{k=1}^{K} \frac{P_{\rm E}h_k\eta_k\gamma_k}{x^*_k}\tau_0=T_{\max}$.
With $x^*_k$, $\forall\, k$, from (\ref{eq_activate3011}), the optimal time allocation for T-WPCN is given by
 \begin{align}
   \small
 \tau^*_0& =\frac{T_{\rm max} }{1 +\sum_{k=1}^{K}\frac{\eta_kP_{\rm E}h_k\gamma_k }{x^*_k}},  \label{tau0}\\
  \tau^*_k& = \frac{\eta_kP_{\rm E}h_k\gamma_k}{x^*_k}\tau^*_0, \forall\, k. \label{tauk}
 \end{align}

\subsection{Optimal Solution for N-WPCN}
Similarly, problem (\ref{probm20}) can be simplified to the following problem:

  \begin{subequations}\label{probm40}
  \small
\begin{align}
 \mathop {\mathrm{maximize} }\limits_{{\tau_{0},\bar{\tau}_{\rm 1} } }~~ &\bar{\tau}_{\rm 1}\log_2\left(1-\sum_{k=1}^{K}p_{{\rm{c}},k}\gamma_k+\frac{\sum_{k=1}^{K}\eta_kP_{\rm E}h_k\gamma_k}{\bar{\tau}_{\rm 1}}\tau_0\right) \\
\mathrm{s.t.} ~~~&\tau_{0}+\bar{\tau}_{\rm 1}\leq T_{\mathop{\max}},\\
& \tau_{0}\geq0, ~  \bar{\tau}_{\rm 1}\geq  0.
\end{align}
  \end{subequations}

 It is interesting to observe that problem (\ref{probm40}) has the same structure as problem (\ref{probm30}) when $K=1$ with only minor changes in constant terms. As such, the proposed solution for T-WPCN can be immediately extended to N-WPCN.  Specifically, the optimal time allocation for N-WPCN is given by
  \begin{align}\label{tau00}
 \tau^{\star}_0 =\frac{T_{\rm max} }{1 +\frac{\sum_{k=1}^{K}\eta_kP_{\rm E}h_k\gamma_k }{x^{\star}}}, \,
 \bar{\tau}^{\star}_1 = \frac{\sum_{k=1}^{K}\eta_kP_{\rm E}h_k\gamma_k}{x^{\star}}\tau^{\star}_0,
 \end{align}
 where $x^{\star}$ is the unique root of
 {\small
 \begin{align}\label{eq_activate404}
\mathcal{G}(x^{\star}) \triangleq & \log_2\left(1-\sum_{k=1}^{K}p_{{\rm{c}},k}\gamma_k +x^{\star}\right)-\frac{x^{\star}\log_2(e)}{1-\sum_{k=1}^{K}p_{{\rm{c}},k}\gamma_k+x^{\star}} \nonumber\\
&-\frac{\sum_{k=1}^{K}\eta_kP_{\rm{E}}h_k\gamma_k\log_2(e)}{1-\sum_{k=1}^{K}p_{{\rm{c}},k}\gamma_k+x^{\star}}=0.
 \end{align}
}

The solutions  proposed in  Sections III-A and B serve as the theoretical foundation for the comparison between T-WPCN and N-WPCN.

\subsection{TDMA versus NOMA}
For notational simplicity, we first denote by $E^*_{\rm TDMA}$ and  $E^{\star}_{\rm NOMA}$  the total energy consumption of T-WPCN and N-WPCN at the optimal solutions to problems (\ref{probm30}) and (\ref{probm40}), respectively. The corresponding SEs are denoted by $R^*_{\rm TDMA}$ and  $R^{\star}_{\rm NOMA}$, respectively. 
 \begin{theorem}\label{WET_time}
 At the optimal solution, 1)  the DL WET time of N-WPCN in (\ref{tau00}) is greater than or equal to that of T-WPCN in (\ref{tau0}), i.e., $\tau^{\star}_0\geq  \tau^*_0$; 2) the energy consumption of N-WPCN is larger than or equal to that of T-WPCN, i.e.,
 \begin{align}
 E^{\star}_{\rm NOMA} \geq E^*_{\rm TDMA}, 
 \end{align}
  where ``='' holds when $p_{{\rm c},k}=0$, $\forall\, k$.
 \end{theorem}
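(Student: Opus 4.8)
The plan is to obtain part 2) as a direct corollary of part 1) and to concentrate all the work on the time inequality. At both optima the energy--causality constraints (\ref{eq401}) and (\ref{eq201}) hold with equality, so each device spends exactly what it harvests; hence the total consumed energy equals the total harvested energy $\big(\sum_{k=1}^{K}\eta_k P_{\rm E}h_k\big)\tau_0$, which for \emph{both} schemes is one and the same positive constant times the DL WET duration $\tau_0$. Therefore $E^{\star}_{\rm NOMA}\ge E^{*}_{\rm TDMA}$ is equivalent to $\tau^{\star}_0\ge\tau^{*}_0$, and the equality case of part 2) is equivalent to $\tau^{\star}_0=\tau^{*}_0$, so only part 1) remains.

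For part 1), abbreviate $a_k=\eta_k P_{\rm E}h_k\gamma_k$, $A=\sum_k a_k$ and $C=\sum_k p_{{\rm c},k}\gamma_k$. From (\ref{tau0})--(\ref{tauk}) and (\ref{tau00}) one has $\tau^{*}_0=T_{\max}/\big(1+\sum_k a_k/x^{*}_k\big)$ and $\tau^{\star}_0=T_{\max}/\big(1+A/x^{\star}\big)$, so $\tau^{\star}_0\ge\tau^{*}_0$ is equivalent to the single scalar inequality $A/x^{\star}\le\sum_k a_k/x^{*}_k$, i.e.\ $x^{\star}\ge x^{(\ast)}$ with $x^{(\ast)}\triangleq A/\big(\sum_k a_k/x^{*}_k\big)$ the $a_k$-weighted harmonic mean of the T-WPCN optimizers $\{x^{*}_k\}$. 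A convenient route is to note that the partially optimized N-WPCN objective $g(\tau_0)\triangleq (T_{\max}-\tau_0)\log_2\!\big(1-C+\tfrac{A\tau_0}{T_{\max}-\tau_0}\big)$ satisfies $g'(\tau_0)=-\mathcal G\!\big(\tfrac{A\tau_0}{T_{\max}-\tau_0}\big)$, where $\mathcal G$ is the increasing function in (\ref{eq_activate404}) whose unique root is $x^{\star}$; hence $g$ is concave with maximizer $\tau^{\star}_0$. Since (\ref{tauk}) gives $A\tau^{*}_0/(T_{\max}-\tau^{*}_0)=x^{(\ast)}$, concavity reduces $\tau^{\star}_0\ge\tau^{*}_0$ to showing $g'(\tau^{*}_0)\ge0$, i.e.\ $\mathcal G(x^{(\ast)})\le0$, i.e.\ $x^{(\ast)}\le x^{\star}$.

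The core is thus the scalar inequality $x^{(\ast)}\le x^{\star}$. Normalizing the logarithm to base $e$ and writing $b_k=1-p_{{\rm c},k}\gamma_k+x^{*}_k$, $B=1-C+x^{\star}$ and $\nu^{*}=\sum_k a_k/b_k$ (the rescaled multiplier $\lambda$), the stationarity conditions (\ref{eq_activate3011}) and (\ref{eq_activate404}) become $\psi(b_k)=\nu^{*}b_k+p_{{\rm c},k}\gamma_k-1$ and $\psi(B)=A+C-1$, where $\psi(t)\triangleq t(\ln t-1)$ is convex and increasing for $t\ge1$; these also yield the compact identity $x^{*}_k=b_k(\ln b_k-\nu^{*})$. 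I would then establish $\psi\big(1-C+x^{(\ast)}\big)\le\psi(B)$ by combining the convexity of $\psi$ and of $t\mapsto 1/t$ (which controls the harmonic mean) with the stationarity relations, the slack being supplied by the aggregate circuit term $C=\sum_j p_{{\rm c},j}\gamma_j\ge p_{{\rm c},k}\gamma_k$ that N-WPCN pays over the whole UL block.

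For the equality claim, setting $p_{{\rm c},k}=0$ makes the per-user function $\mathcal G_k$ the same for every $k$; strict monotonicity then forces all $x^{*}_k$ to a single value that satisfies precisely (\ref{eq_activate404}), so $x^{*}_k=x^{\star}$ for all $k$, the harmonic mean collapses to $x^{(\ast)}=x^{\star}$, and $\tau^{\star}_0=\tau^{*}_0$, giving $E^{\star}_{\rm NOMA}=E^{*}_{\rm TDMA}$. I expect the decisive difficulty to be the averaging inequality $x^{(\ast)}\le x^{\star}$ itself: it cannot be obtained term by term, since a strong device with small circuit power may well have $x^{*}_k>x^{\star}$, and only the harmonic averaging over all devices---dominated by the strong low-circuit users with small $x^{*}_k$---restores the inequality. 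Consequently the argument must exploit the full coupling of the $\{x^{*}_k\}$ through the single multiplier $\nu^{*}$, rather than any pointwise or arithmetic-mean relaxation, both of which are too weak in strongly asymmetric regimes.
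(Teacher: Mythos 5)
Your reduction of part 2) to part 1) is sound and agrees with the paper's (the paper accounts the energy as $E=P_{\rm E}\tau_0$ rather than as the total harvested energy $\bigl(\sum_k\eta_kP_{\rm E}h_k\bigr)\tau_0$, but both are a fixed positive multiple of $\tau_0$, so nothing changes). Your reduction of part 1) to the single scalar inequality $x^{(\ast)}\le x^{\star}$, with $x^{(\ast)}=A\big/\sum_k a_k/x^*_k$ the $a_k$-weighted harmonic mean of the $\{x^*_k\}$, is also correct: the identity $g'(\tau_0)=-\mathcal{G}\bigl(A\tau_0/(T_{\max}-\tau_0)\bigr)$ checks out, and together with the strict monotonicity of $\mathcal{G}$ it makes that reduction airtight. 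The genuine gap is that you never prove $x^{(\ast)}\le x^{\star}$: the paragraph beginning ``I would then establish\dots'' is a list of ingredients (convexity of $\psi$, convexity of $t\mapsto 1/t$, the stationarity relations, the slack $C\ge p_{{\rm c},k}\gamma_k$) with no argument combining them, and you yourself flag this step as the decisive difficulty. Since that inequality \emph{is} the theorem---everything before it is bookkeeping---the proposal is a correct reformulation plus an unproven claim.

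What is worth recording, however, is that your diagnosis of \emph{why} this step is hard is more accurate than the paper's own one-line proof, which asserts the pointwise bound $x^{\star}\ge x^*_k$ for every $k$ (``since $\sum_j p_{{\rm c},j}\gamma_j\ge p_{{\rm c},k}\gamma_k$, it is easy to show\dots'') and then sums $a_k/x^*_k\ge a_k/x^{\star}$. You claim the term-by-term route fails, and a first-order perturbation of the symmetric zero-circuit-power solution $x^*_1=\dots=x^*_K=x^{\star}=x_0$ confirms it: giving only user $1$ a small circuit term $\epsilon=p_{{\rm c},1}\gamma_1$ yields, with $u_0=1+x_0$ and $K=2$, $x^*_1-x^{\star}=u_0a_2\epsilon/\bigl(x_0(x_0+A)\bigr)>0$, so the user carrying the circuit power overshoots $x^{\star}$, while the weighted combination still obeys $a_1\delta_1+a_2\delta_2=a_1(u_0+x_0+A)\epsilon/(x_0+A)\le A\delta^{\star}$, i.e.\ only the harmonic-mean inequality survives. (A numerical check with $A=1$, $a_1=a_2=0.5$, $\epsilon=0.01$ gives $x^*_1\approx1.741>x^{\star}\approx1.738$ yet $\sum_k a_k/x^*_k>A/x^{\star}$, so the theorem's conclusion holds while the paper's intermediate claim does not.) In short: you have correctly isolated the inequality that actually needs proving and correctly observed that the paper's stated justification is too strong to be true, but you have not closed the argument; until the averaging step is carried out (or replaced, e.g.\ by a direct comparison of the concave value functions in $\tau_0$), the proof is incomplete.
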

 \begin{proof}
 Since $\sum_{k=1}^{K}p_{{\rm{c}},k}\gamma_k\geq p_{{\rm{c}},k}\gamma_k$, it is easy to show that $x^{\star}\geq x^*_k$, $\forall\, k$, from (\ref{eq_activate404}) and (\ref{eq_activate3011}), where ``='' holds when $p_{{\rm c},k}=0$, $\forall\, k$. Then, it follows from (\ref{tau00}) and (\ref{tau0}) that  $\tau^{\star}_0\geq  \tau^*_0$. Furthermore, since each device depletes all of its harvested energy, then the total energy consumption of N-WPCN and T-WPCN satisfies $E^{\star}_{\rm NOMA} = P_{\rm E}\tau^{\star}_0 \geq E^*_{\rm TDMA} = P_{\rm E}\tau^*_0$.
 \end{proof}

Theorem \ref{WET_time} implies that N-WPCN is more energy demanding than T-WPCN in terms of the total energy consumption. This is fundamentally due to simultaneous transmissions of multiple devices during UL WIT, which thereby leads to a higher circuit energy consumption. Furthermore,
 since $\tau^{\star}_0\geq  \tau^*_0$, more energy is also wasted during DL WET for N-WPCN than for T-WPCN. Next, we compare the SE of the two networks.

 \begin{theorem}\label{thm:rate_comparsion}
 The maximum SE of T-WPCN is greater than or equal to that of N-WPCN, i.e.,
\begin{align}\label{eq4.23}
R^*_{\rm TDMA}\geq R^{\star}_{\rm NOMA},
\end{align}
where ``='' holds when $p_{{\rm c},k}=0$, $\forall\, k$. 
 \end{theorem}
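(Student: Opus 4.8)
The plan is to reduce the rate comparison to a comparison of the optimal dual variables. Writing $c_k\triangleq 1-p_{{\rm c},k}\gamma_k$ and $A_k\triangleq\eta_kP_{\rm E}h_k\gamma_k$, the stationarity condition $\partial\mathcal{L}/\partial\tau_k=0$ gives $\log_2(c_k+x^*_k)-x^*_k\log_2(e)/(c_k+x^*_k)=\lambda^*$ for every $k$, while $\partial\mathcal{L}/\partial\tau_0=0$ gives $\lambda^*=\log_2(e)\sum_k A_k/(c_k+x^*_k)$. First I would substitute the per-user relation into $R^*_{\rm TDMA}=\sum_k\tau^*_k\log_2(c_k+x^*_k)$ and use $\tau^*_k x^*_k=A_k\tau^*_0$ together with the active time constraint $\tau^*_0+\sum_k\tau^*_k=T_{\max}$; the two dual relations then collapse the objective to $R^*_{\rm TDMA}=\lambda^* T_{\max}$. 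Because problem (\ref{probm40}) is exactly the single-user instance of (\ref{probm30}), the identical manipulation yields $R^{\star}_{\rm NOMA}=\lambda^{\star} T_{\max}$. Hence it suffices to prove $\lambda^*\geq\lambda^{\star}$.

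Next I would characterize each multiplier as the root of a scalar, strictly monotone map. Define $\phi(x,c)\triangleq\log_2(c+x)-\log_2(e)+c\log_2(e)/(c+x)$; differentiation shows $\phi$ is increasing in both $x$ and $c$ on the feasible range, so for fixed $\lambda$ the equation $\phi(x,c)=\lambda$ has a unique solution $x=g_c(\lambda)$, increasing in $\lambda$. The TDMA multiplier is then the root of $F_T(\lambda)\triangleq\log_2(e)\sum_k A_k/(c_k+g_{c_k}(\lambda))-\lambda$, and the NOMA multiplier the root of $F_N(\lambda)\triangleq\log_2(e)\,\bar A/(\bar c+g_{\bar c}(\lambda))-\lambda$, where $\bar A=\sum_k A_k$ and $\bar c=1-\sum_k p_{{\rm c},k}\gamma_k$. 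Both $F_T$ and $F_N$ are strictly decreasing, so their roots are unique, and it is enough to establish $F_T(\lambda^{\star})\geq 0$, which then forces $\lambda^*\geq\lambda^{\star}$.

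The crux, and the step I expect to be the main obstacle, is the cross comparison at the common value $\lambda^{\star}$, because the two effects compete: by Theorem~\ref{WET_time} the NOMA root satisfies $x^{\star}\geq x^*_k$, yet the NOMA effective constant $\bar c\leq c_k$ is smaller, so no term-by-term bound follows directly. I would resolve this by comparing the denominators $u_k\triangleq c_k+g_{c_k}(\lambda^{\star})$ and $\bar u\triangleq\bar c+g_{\bar c}(\lambda^{\star})$. Rewriting $\phi$ as $\Phi(u,c)=\log_2 u-\log_2(e)+c\log_2(e)/u$, both partials $\partial\Phi/\partial u=x\log_2(e)/u^2$ and $\partial\Phi/\partial c=\log_2(e)/u$ are nonnegative; since $\Phi(u_k,c_k)=\lambda^{\star}=\Phi(\bar u,\bar c)$ with $\bar c\leq c_k$, monotonicity in $c$ gives $\Phi(u_k,\bar c)\leq\Phi(\bar u,\bar c)$, and then monotonicity in $u$ (valid as $u_k\geq c_k\geq\bar c$) yields $u_k\leq\bar u$. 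This is exactly the inequality needed: it gives $A_k/u_k\geq A_k/\bar u$ for each $k$, and summing produces $\log_2(e)\sum_k A_k/u_k\geq\log_2(e)\,\bar A/\bar u=\lambda^{\star}$, i.e. $F_T(\lambda^{\star})\geq 0$.

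Finally I would assemble the pieces: $F_T(\lambda^{\star})\geq 0$ together with the strict monotonicity of $F_T$ and $F_T(\lambda^*)=0$ gives $\lambda^*\geq\lambda^{\star}$, hence $R^*_{\rm TDMA}=\lambda^* T_{\max}\geq\lambda^{\star} T_{\max}=R^{\star}_{\rm NOMA}$. For the equality condition, when $p_{{\rm c},k}=0$ for all $k$ we have $c_k=\bar c=1$ and, by Theorem~\ref{WET_time}, $x^{\star}=x^*_k$, so $u_k=\bar u$ and every inequality above holds with equality, giving $R^*_{\rm TDMA}=R^{\star}_{\rm NOMA}$.
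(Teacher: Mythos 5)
Your proof is correct, but it follows a genuinely different route from the paper's. The paper argues primally: from the optimal NOMA solution $\{\tau^{\star}_0,\bar{\tau}^{\star}_1\}$ it constructs a feasible TDMA schedule with $\widetilde{\tau}_0=\tau^{\star}_0$ and $\sum_k\widetilde{\tau}_k=\bar{\tau}^{\star}_1$ that equalizes all per-user SNRs, shows via concavity of $\log_2$ that this schedule already attains $R^{\star}_{\rm NOMA}$, and then invokes optimality of $R^*_{\rm TDMA}$; the equality case is handled by showing the equal-SNR schedule is itself optimal when $p_{{\rm c},k}=0$. You instead work through the KKT system: the identity $R^*_{\rm TDMA}=\lambda^*T_{\max}$ (and its NOMA counterpart), obtained by substituting the stationarity conditions and the active time constraint into the objective, reduces the theorem to $\lambda^*\geq\lambda^{\star}$, which you settle by a scalar monotonicity argument. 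I checked the key steps: the collapse of the objective to $\lambda T_{\max}$ is correct (using $\tau^*_kx^*_k=A_k\tau^*_0$ and $\tau^*_0+\sum_k\tau^*_k=T_{\max}$); $F_T$ is strictly decreasing with unique root $\lambda^*$; and the cross comparison $u_k\leq\bar{u}$ does follow from $\Phi(u_k,\bar{c})\leq\Phi(u_k,c_k)=\lambda^{\star}=\Phi(\bar{u},\bar{c})$ together with strict monotonicity of $\Phi(\cdot,\bar{c})$ for $u>\bar{c}$, which is the one place where the two competing effects ($x^{\star}\geq x^*_k$ versus $\bar{c}\leq c_k$) are correctly disentangled. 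One small point you should make explicit: $g_{c_k}(\lambda^{\star})$ is well defined because $c_k\leq 1$ gives $\phi(0,c_k)=\log_2 c_k\leq 0\leq\lambda^{\star}$, so $\lambda^{\star}$ lies in the range of $\phi(\cdot,c_k)$ on $x\geq 0$. As for what each approach buys: the paper's construction is more elementary (feasibility plus Jensen's inequality) and directly exposes the equal-SNR structure of the near-optimal TDMA allocation, whereas your argument delivers the clean interpretation that the maximum SE equals the shadow price of time multiplied by $T_{\max}$, and it sidesteps the need to verify that the equal-SNR construction in (19) exists and is feasible, a step the paper asserts without proof.
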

 \begin{proof}
Assume that $\{\tau^{\star}_{0},\bar{\tau}^{\star}_{1}\}$ achieves the maximum SE of problem (\ref{probm40}), $R^{\star}_{\rm NOMA}$. {Then, we can construct a new solution $\{\widetilde{\tau}_0,\{{\widetilde{\tau}}_{k}\}\}$ satisfying $\widetilde{\tau}_0 =\tau^{\star}_{0} $ and $\sum_{k=1}^{K} \widetilde{\tau}_k =\bar{\tau}^{\star}_1$ such that all devices achieve the same signal-to-noise ratio (SNR) in T-WPCN, i.e.,
\begin{align}\label{eq18}
\text{SNR}= &\frac{(\eta_kP_{\rm E}h_k\widetilde{\tau}_0-p_{{\rm{c}},k}\widetilde{\tau}_k)\gamma_k}{\widetilde{\tau}_k}= \frac{(\eta_mP_{\rm E}h_m\widetilde{\tau}_0-p_{{\rm{c}},m}\widetilde{\tau}_m)\gamma_m}{\widetilde{\tau}_m} \nonumber \\
=&\frac{ \sum_{k=1}^{K}(\eta_kP_{\rm E}h_k\widetilde{\tau}_0-p_{{\rm{c}},k}\widetilde{\tau}_k)\gamma_k}{ \sum_{k=1}^{K}\widetilde{\tau}_k}, \forall\, m\neq k.
\end{align}
It can be  verified that the constructed solution always exists and is also feasible for problem (\ref{probm30}). Denote the SEs achieved by the optimal solution $\{\tau^*_0,\{\tau^*_{k}\}\}$ and the constructed solution $\{\widetilde{\tau}_0,\{{\widetilde{\tau}}_{k}\}\}$ as $R^*_{\rm TDMA}$ and $\widetilde{R}_{\rm TDMA}$, respectively. Then, it follows that
{\small
\begin{align}
R^*_{\rm TDMA} & \geq \widetilde{R}_{\rm TDMA} \nonumber\\
& =  \sum_{k=1}^{K}\widetilde{\tau}_k\log_2\left( 1+\frac{(\eta_kP_{\rm E}h_k\widetilde{\tau}_0-p_{{\rm{c}},k}\widetilde{\tau}_k)\gamma_k}{\widetilde{\tau}_k}\right) \nonumber\\
& =  \sum_{k=1}^{K}\widetilde{\tau}_k\log_2\left( 1+\frac{ \sum_{m=1}^{K}(\eta_mP_{\rm E}h_m\widetilde{\tau}_0-p_{{\rm{c}},m}\widetilde{\tau}_m)\gamma_m }{ \sum_{m=1}^{K}\widetilde{\tau}_m}\right) \nonumber\\
&\overset{(a)}\geq \bar{\tau}^{\star}_1\log_2\left( 1+\frac{ \sum_{m=1}^{K}(\eta_mP_{\rm E}h_m{\tau}^{\star}_0-p_{{\rm{c}},m}\bar{\tau}^{\star}_1)\gamma_m }{ \bar{\tau}^{\star}_1} \right) \nonumber\\
&= R^{\star}_{\rm NOMA},
\end{align}}where  inequality ``$(a)$'' holds due to $\sum_{k=1}^{K} \widetilde{\tau}_k =\bar{\tau}^{\star}_1$ and $0<\widetilde{\tau}_k<\bar{\tau}^{\star}_1$, $\forall\, k$, and the equality holds when $p_{{\rm c},k}=0$, $\forall\, k$. } Thus, if $\exists\, k, p_{{\rm c},k}>0$, it follows that $R^*_{\rm TDMA}> R^{\star}_{\rm NOMA}$.
 Next, we prove that when  $p_{{\rm c},k}=0$, $\forall\, k$, the constructed solution is the optimal solution to problem (\ref{probm30}), i.e.,  $\tau^*_0=\widetilde{\tau}_0$ and  $\tau^*_k=\widetilde{\tau}_k$. 
The SE of T-WPCN is given by
\begin{align}
R_{\rm TDMA}& = \sum_{k=1}^{K} \tau_k\log_2\left( 1 +\frac{\eta_kP_{\rm E}h_k\gamma_k }{\tau_k}\tau_0\right) \nonumber \\
&\overset{(b)}\leq  \sum_{k=1}^{K} \tau_k\log_2\left( 1 +\frac{\sum_{m=1}^{K}\eta_mP_{\rm E}h_m\gamma_m }{\sum_{m=1}^{K}\tau_m}\tau_0\right)\nonumber \\
& = (1-\tau_0) \log_2\left( 1 +\frac{\sum_{m=1}^{K}\eta_mP_{\rm E}h_m\gamma_m }{1-\tau_0}\tau_0\right) \nonumber \\
&\overset{(c)} \leq  (1-\tau^{\star}_0) \log_2\left( 1 +\frac{\sum_{m=1}^{K}\eta_mP_{\rm E}h_m\gamma_m }{1-\tau^{\star}_0}\tau^{\star}_0\right)  \nonumber \\
&= R^{\star}_{\rm NOMA},
\end{align} where ``$(b)$'' holds due to the concavity of the logarithm function and ``='' holds when $\frac{\eta_kP_{\rm E}h_k\gamma_k }{\tau_k}\tau_0= \frac{\eta_mP_{\rm E}h_m\gamma_m }{\tau_m}\tau_0$, $\forall\, k$, which is exactly the same as (\ref{eq18}) for $p_{{\rm c},k}=0$, $\forall\, k$. Thus, we have $\tau^{*}_k =\widetilde{\tau}_k$.
 Equality  in ``$(c)$''  is due to the optimality of $\bar{\tau}^{\star}_0$ for N-WPCN. Thus, it follows that $\tau^*_0= \bar{\tau}^{\star}_0= \widetilde{\tau}_0$.
\end{proof}

Theorem \ref{thm:rate_comparsion} answers the question raised in the introduction regarding to the SE comparison of T-WPCN and N-WPCN. Specifically, TDMA in general achieves a higher SE than NOMA for wireless powered IoT devices.
This seems contradictory to the conclusions of previous works, e.g. \cite{ding2014performance}, which have shown that NOMA always outperforms OMA schemes such as TDMA. Such a conclusion, however, was based on the conventional transmit power limited scenario where more transmit power is always beneficial for improving the SE by leveraging SIC. To show this, suppose that the transmit power of device $k$ is limited by $p_k$ and the energy causality constraints in (\ref{probm10}) are removed.
By setting $\tau_0=0$ in (\ref{eq402}), we have
\begin{align}
R_{\rm TDMA}&=\sum_{k=1}^{K} \tau_k  \log_2(1+{p_k\gamma_k}) \nonumber\\
&\overset{(d)} \leq \sum_{k=1}^{K} \tau_k  \log_2\left(1+\sum_{m=1}^{K}{p_m\gamma_m}\right)\nonumber \\
&= T_{\max} \log_2\left(1+\sum_{k=1}^{K}{p_k\gamma_k}\right)= R_{\rm NOMA},
\end{align}
where strict inequality ``$(d)$'' holds if $p_k>0, \forall\,k$. Accordingly, $E_{\rm TDMA}=\sum_{k=1}^{K}\tau_kp_k \leq \sum_{k=1}^{K}T_{\max}p_k  = T_{\max}\sum_{k=1}^{K}p_k= E_{\rm NOMA}$.
{ This suggests that the potential SE gain achieved by  NOMA depends on the considered scenario.  When each user has a maximum transmit power limitation $p_k$, which we refer to as transmit power limited scenario, all users would transmit at $p_k$ for the entire duration $T_{\max}$.  The resulting SE gain of NOMA is at the expense of a higher energy consumption as shown above. On the other hand, if the total available energy of each device is constrained, which we refer to as energy limited scenario,
NOMA provides no SE gain over TDMA as shown in Theorem \ref{thm:rate_comparsion}, which is consistent with the observations in \cite{diamantoulakis2016wireless,chingoska2016resource}. More importantly, when the circuit power consumption is taken into account for practical IoT devices, NOMA achieves a strictly lower SE than TDMA. Recall that the key principle of NOMA for enhancing the SE is to allow devices to access the same spectrum simultaneously. This,  however, inevitably leads to a higher circuit energy consumption for NOMA because of the longer transmission time compared to TDMA, which is particularly detrimental to  IoT devices that are energy limited in general.}



\section{Numerical Results }
There are $10$ IoT devices randomly and uniformly distributed inside a disc with the PB in the center. {The carrier frequency is $750$ MHz and the bandwidth is $180$ kHz as in typical NB-IoT systems \cite{wang2017primer}. The reference distance is $1$ meter and the maximum service distance is $5$ meters \cite{bi2016wireless}. The AP is located $50$ meters away from the PB. Both the DL and UL channel power gains are modeled as $10^{-3}\rho^2d^{-\alpha}$  \cite{ju14_throughput}, where $\rho^2$ is an exponentially distributed random variable (i.e., Rayleigh fading is assumed) with unit mean and $d$ is the link distance.} The path loss exponent is set as $\alpha=2.2$.
 Without loss of generality, it is assumed that all IoT devices have identical parameters which are set as $\eta_k=0.9$ and $p_{{\rm c},k}=0.1$\,mW, $\forall\,k$ \cite{martins2017energy}. Other important parameters are set as $\sigma^2=-117$\,dBm, $P_{\rm E}=40$\,dBm, and $T_{\max}=0.1$\,s.




\subsection{SE versus PB Transmit Power}
Fig. \ref{pb}  shows the achievable throughput and energy consumption versus the PB transmit power, respectively. For comparison, two baseline schemes adopting TDMA and NOMA respectively are considered,  where $\tau_0=\frac{T_{\max}}{2}$ is set for both of them. This corresponds to the case that only $E^h_k=\frac{\eta_kP_{\rm E}h_kT_{\max}}{2}$ Joule of energy is available for device $k$, i.e., energy constrained IoT networks. Yet, the UL WIT is still optimized for maximizing the SE.
In Fig. \ref{pb} (a),  the throughputs of both T-WPCN and N-WPCN improve with $P_{\rm E}$. This is intuitive since with larger $P_{\rm E}$, the wireless powered IoT devices are able to harvest more energy during DL WET and hence achieve a higher throughput in UL WIT.  In addition, the baseline schemes suffer from a throughput loss for both TDMA and NOMA compared to the corresponding optimal scheme due to the fixed time allocation for DL WET, which implies that optimizing the DL WET duration is also important for maximizing the SE of wireless powered IoT networks. Furthermore, as suggested by Theorem 2, T-WPCN outperforms N-WPCN significantly and the performance gap between them becomes larger as $P_{\rm E}$ increase.
This is because larger $P_{\rm E}$ will reduce the DL WET time and thereby leave more time for UL WIT. Since all the devices in N-WPCN are scheduled simultaneously for UL WIT, the circuit energy consumption will be significantly increased compared to that of T-WPCN, which thus leads to a larger performance gap. Fig. \ref{pb} (b) shows that N-WPCN  is in general more energy demanding compared to T-WPCN for the optimal scheme, which verifies our theoretical finding in Theorem 1. Since $\tau_0=\frac{T_{\max}}{2}$ is set for both baseline schemes, they have the same total energy consumption. In addition, when $P_{\rm E}=28$ dBm, the energy consumption of optimal N-WPCN is close to that of optimal T-WPCN, which implies that each device $k$, $\forall\, k$, basically harvests a similar amount of energy in the DL of T-WPCN and N-WPCN. As such, the substantial SE loss in Fig. \ref{pb} (a) indicates that a significant portion of the harvested energy is consumed by the circuit rather than for signal transmission, due to the simultaneous transmission feature of NOMA.
\begin{figure}[!t]\vspace{-0.1cm}
\centering
\subfigure[Throughput comparison.]{\includegraphics[width=1.7in, height=1.4in]{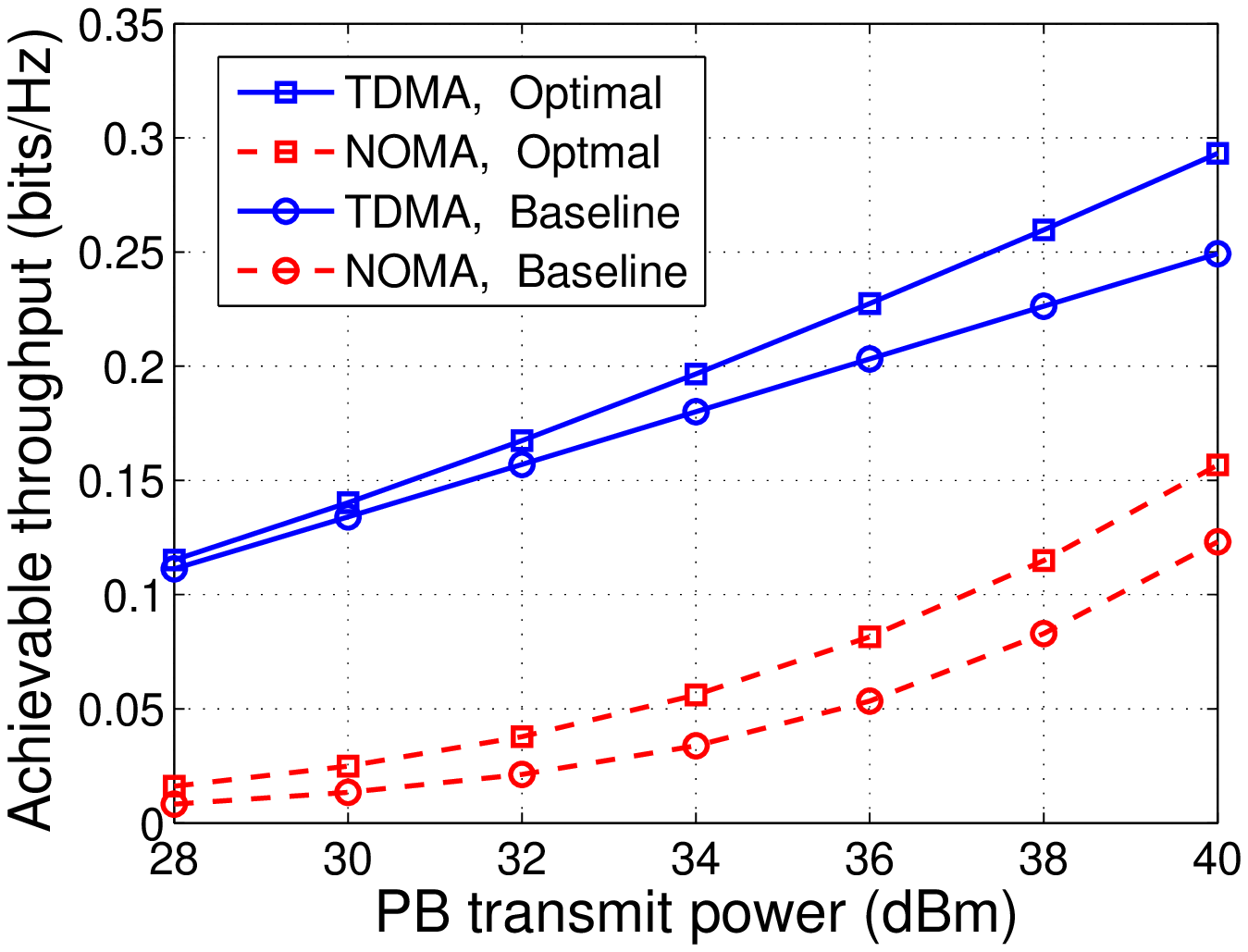}}
\subfigure[Energy consumption comparison.]{\includegraphics[width=1.7in, height=1.4in]{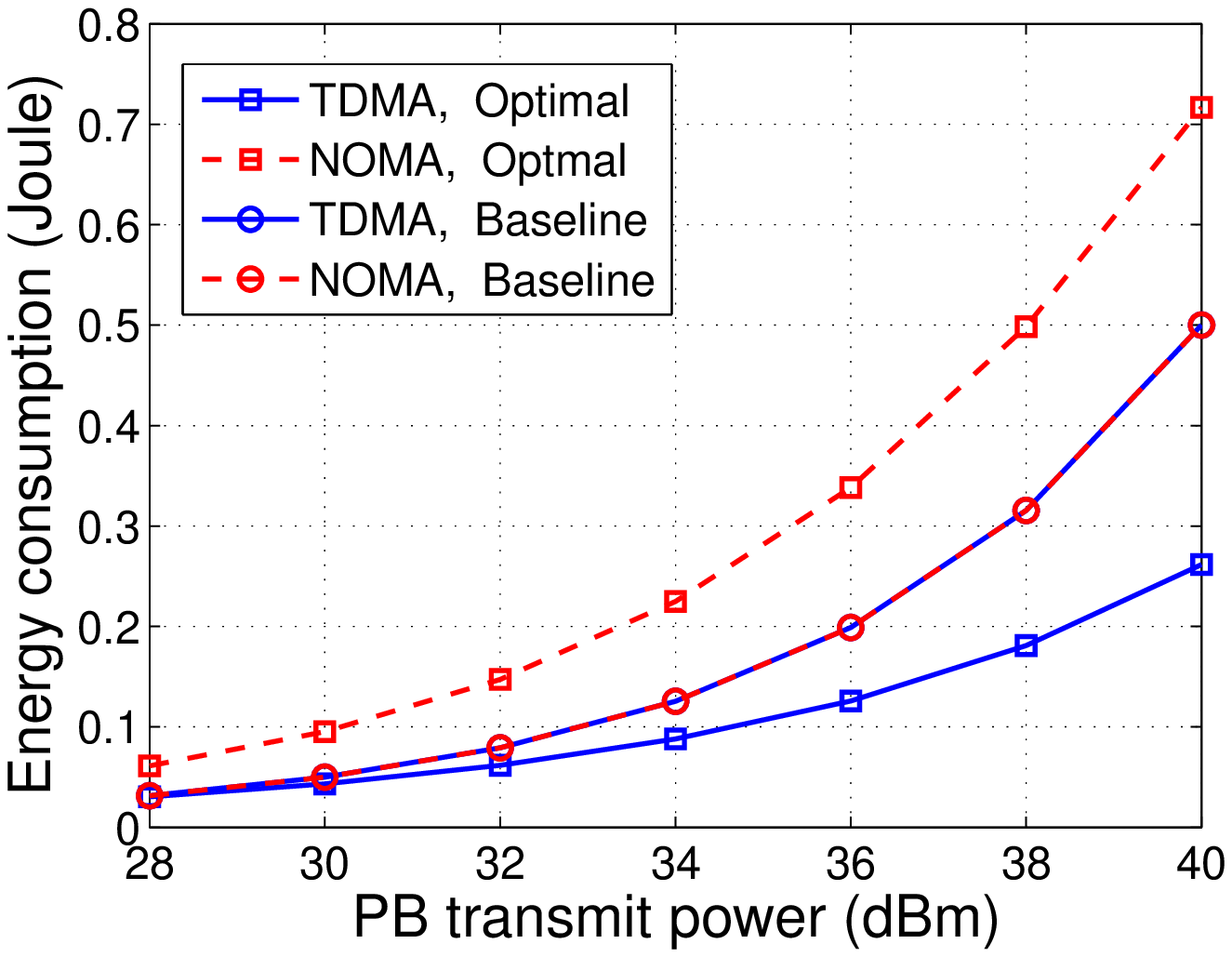}}
\caption{Throughput and energy consumption versus PB transmit power. } \label{pb}\vspace{-0.2cm}
\end{figure}
\subsection{SE versus Device Circuit Power}
Fig. \ref{pc} depicts the throughput and energy consumption versus the device circuit power consumption, respectively. Several observations are made as follows. First, for $p_{{\rm c},k}=0$ in Fig. \ref{pc} (a) and (b), T-WPCN and N-WPCN achieve the same throughput and energy consumption for $K=10$ and $K=50$, which coincides with our findings in Theorems 1 and 2. Second, for $K=10$ and $K=50$, the throughput and energy consumption for T-WPCN moderately decreases and increases with  $p_{{\rm c},k}$, respectively, while that for N-WPCN decreases and increases sharply with $p_{{\rm c},k}$, respectively. This suggests that the performance of N-WPCN is sensitive to $p_{{\rm c},k}$. In fact, for T-WPCN, when a device suffers from a worse DL channel condition, the corresponding harvested energy is also less. Then, the device will be allocated a short UL WIT duration such that the energy causality constraint is satisfied. However, for N-WPCN, since all devices transmit in the UL simultaneously, to meet the energy causality of all the devices, i.e., $\left({p_k}+p_{{\rm{c}},k}\right)\bar{\tau}_{\rm 1}\leq \eta_kP_{\rm E}h_k\tau_0= \eta_kP_{\rm E}h_k(1-\bar{\tau}_{\rm 1}), ~ \forall\, k$, it follows that $\bar{\tau}_{\rm 1}\leq \frac{ \eta_kP_{\rm E}h_k}{{p_k}+p_{{\rm{c}},k}+\eta_kP_{\rm E}h_k}\leq \frac{ \eta_kP_{\rm E}h_k}{p_{{\rm{c}},k}+\eta_kP_{\rm E}h_k} $, $ \forall\, k$. As can be seen, the UL WIT duration $\bar{\tau}_{\rm 1}$ is always limited by the worst DL channel gain of all devices for $p_{{\rm{c}},k}>0$, a phenomenon which we refer to as ``worst user bottleneck problem". In addition, concurrent transmissions also lead to higher circuit energy consumption. As a result, the throughput and energy consumption of N-WPCN are significantly reduced and increased, respectively, as $p_{{\rm c},k}$ increases. Third, given the ``worst user bottleneck problem", it is expected that when $K$ increases from $10$ to $50$, the performance of N-WPCN decreases in both Fig. \ref{pc} (a) and (b). In contrast, for T-WPCN, since the UL WIT duration of each user can be individually allocated based on the DL and UL channel gains of each device,  multiuser diversity can be exploited to improve the performances as  $K$ increases from $10$ to $50$.

\begin{figure}[!t]\vspace{-0.1cm}
\centering
\subfigure[Throughput comparison.]{\includegraphics[width=1.7in, height=1.4in]{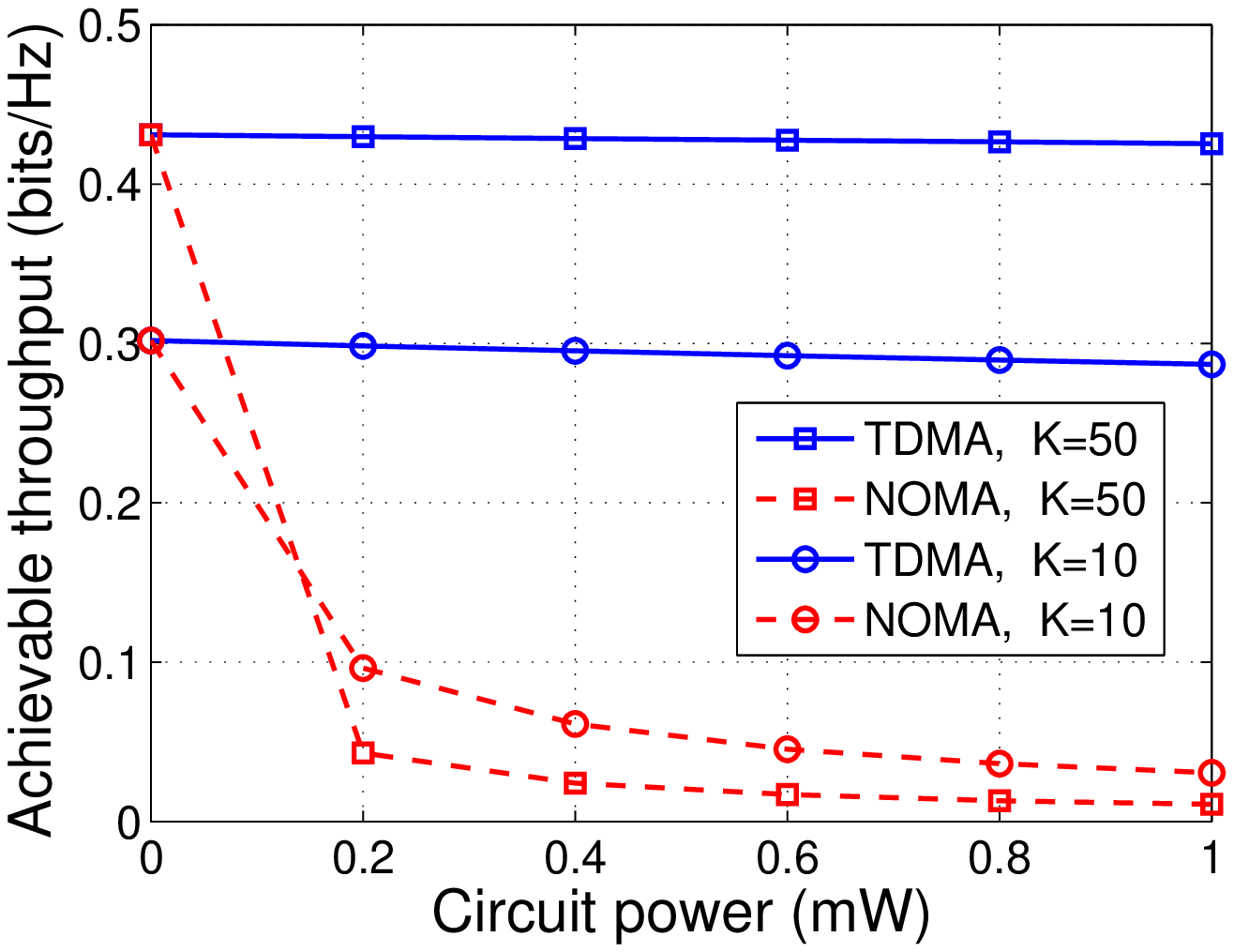}}
\subfigure[Energy consumption comparison.]{\includegraphics[width=1.7in, height=1.4in]{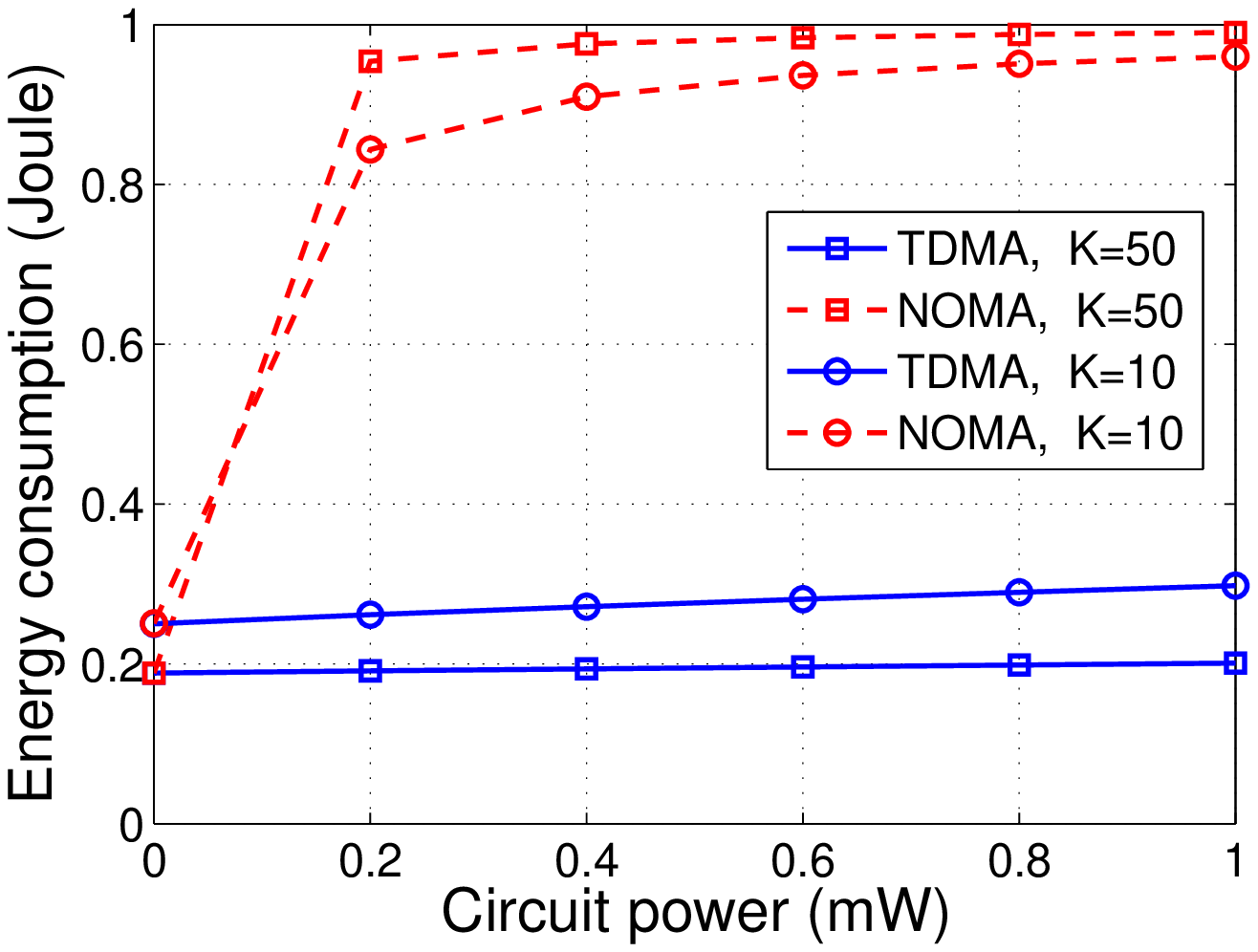}}
\caption{Throughput and energy consumption versus device circuit power. } \label{pc}\vspace{-0.2cm}
\end{figure}

%

\section{Conclusions}
In this paper, we have answered a fundamental question: Does NOMA improve SE and/or  reduce the total energy consumption of the wireless powered IoT networks? By taking  into account the circuit energy consumption of the IoT devices, we have found that N-WPCN is neither spectral efficient nor energy efficient, compared to T-WPCN. 
 This suggests that NOMA may  not be a practical solution for spectral and energy efficient wireless IoT networks with energy constrained devices.  The case with user fairness consideration is an interesting topic for future work.
 In addition, the results in the paper also suggest that energy-efficient NOMA transmission is a promising research direction that is worth studying.



\bibliographystyle{IEEEtran}
\bibliography{IEEEabrv,mybib}
\end{document}